\newtheorem{theorem}{Theorem}
\newtheorem{definition}{Definition}
\DeclareMathOperator*{\argmin}{\arg\min}
\begin{document}

%%
%% The "title" command has an optional parameter,
%% allowing the author to define a "short title" to be used in page headers.
\title{Local Ratio based Real-time Job Offloading and Resource Allocation in Mobile Edge Computing
}

%%
%% The "author" command and its associated commands are used to define
%% the authors and their affiliations.
%% Of note is the shared affiliation of the first two authors, and the
%% "authornote" and "authornotemark" commands
%% used to denote shared contribution to the research.

\author{Chuanchao Gao, Arvind Easwaran}
\email{gaoc0008@e.ntu.edu.sg, arvinde@ntu.edu.sg}
\affiliation{%
  \institution{
    \textit{College of Computing and Data Science} \\
    \textit{Energy Research Institute @ NTU, Interdisciplinary Graduate Programme}\\
    \textit{Nanyang Technological University}
  }
  \country{Singapore}
}

% \author{Lars Th{\o}rv{\"a}ld}
% \affiliation{%
%   \institution{The Th{\o}rv{\"a}ld Group}
%   \city{Hekla}
%   \country{Iceland}}
% \email{larst@affiliation.org}

%%
%% By default, the full list of authors will be used in the page
%% headers. Often, this list is too long, and will overlap
%% other information printed in the page headers. This command allows
%% the author to define a more concise list
%% of authors' names for this purpose.

%\renewcommand{\shortauthors}{Trovato et al.}

%%
%% The abstract is a short summary of the work to be presented in the
%% article.
\begin{abstract}
Mobile Edge Computing (MEC) has emerged as a promising paradigm enabling vehicles to handle computation-intensive and time-sensitive applications for intelligent transportation. Due to the limited resources in MEC, effective resource management is crucial for improving system performance. While existing studies mostly focus on the job offloading problem and assume that job resource demands are fixed and given apriori, the joint consideration of job offloading (selecting the edge server for each job) and resource allocation (determining the bandwidth and computation resources for offloading and processing) remains underexplored. This paper addresses the joint problem for deadline-constrained jobs in MEC with both communication and computation resource constraints, aiming to maximize the total utility gained from jobs. 
To tackle this problem, we propose an approximation algorithm, $\mathtt{IDAssign}$, with an approximation bound of $\frac{1}{6}$, and experimentally evaluate the performance of $\mathtt{IDAssign}$ by comparing it to state-of-the-art heuristics using a real-world taxi trace and object detection applications.
%$\mathtt{IDAssign}$ is the first approximation solution with constant approximation ratio for our problem. Finally, we experimentally evaluate the performance of $\mathtt{IDAssign}$ and compare it to state-of-the-art heuristics using a real-world taxi trace and object detection applications.
\end{abstract}

%%
%% The code below is generated by the tool at http://dl.acm.org/ccs.cfm.
%% Please copy and paste the code instead of the example below.
%%
\begin{CCSXML}
<ccs2012>
   <concept>
       <concept_id>10003752.10003809.10003636.10003808</concept_id>
       <concept_desc>Theory of computation~Scheduling algorithms</concept_desc>
       <concept_significance>500</concept_significance>
       </concept>
 </ccs2012>
\end{CCSXML}

\ccsdesc[500]{Theory of computation~Scheduling algorithms}

%%
%% Keywords. The author(s) should pick words that accurately describe
%% the work being presented. Separate the keywords with commas.
\keywords{Mobile Edge Computing, Deadline-constrained Job Offloading and Resource Allocation, Approximation Algorithm}
%% A "teaser" image appears between the author and affiliation
%% information and the body of the document, and typically spans the
%% page.
% \begin{teaserfigure}
%   \includegraphics[width=\textwidth]{sampleteaser}
%   \caption{Seattle Mariners at Spring Training, 2010.}
%   \Description{Enjoying the baseball game from the third-base
%   seats. Ichiro Suzuki preparing to bat.}
%   \label{fig:teaser}
% \end{teaserfigure}

% \received{20 February 2007}
% \received[revised]{12 March 2009}
% \received[accepted]{5 June 2009}

%%
%% This command processes the author and affiliation and title
%% information and builds the first part of the formatted document.
\maketitle

\addtolength{\dbltextfloatsep}{-2mm}
\addtolength{\abovecaptionskip}{-3mm}
\addtolength{\textfloatsep}{-3mm}
\addtolength{\floatsep}{-2mm}

\section{Introduction}
Internet of Vehicles (IoV), a derivative technology of the Internet of Things (IoT), plays a pivotal role in realizing intelligent transportation systems by connecting vehicles and infrastructure to enable real-time service support.%\cite{zhuang2020sdn}. 
IoV utilizes advanced wireless communication technologies such as ultra-reliable low-latency communication of 5G to facilitate intelligent applications like smart traffic control, autonomous driving, and real-time navigation. These applications are generally computation-intensive and highly sensitive to latency, posing significant challenges for onboard vehicular systems with limited computing capabilities.

Mobile Edge Computing (MEC) has emerged as a promising paradigm enabling vehicles to support computation-intensive and time-critical applications in IoV. In MEC, vehicles can offload jobs to roadside edge servers (ES) that provide communication (bandwidth) and computation resources for instant processing. Deploying ES close to vehicles significantly reduces communication latency compared to traditional cloud computing, enabling prompt responses to vehicular requests. However, the bandwidth and computation resources of ES are limited, necessitating efficient strategies for \textit{job offloading} (selecting the ES for each job) and \textit{resource allocation} (determining the bandwidth and computation resource allocations for offloading and processing) in MEC.

Jointly optimizing job offloading and resource allocation enables MEC systems to dynamically allocate resources based on availability, improving overall resource utilization. For instance, an MEC system with abundant computational resources but limited bandwidth can allocate more computing power while reducing bandwidth per job, thereby accommodating more jobs while maintaining the required quality of service. While numerous studies have investigated the Deadline-constrained job Offloading Problem (DOP) in MEC under the assumption of fixed resource allocation per job \cite{li2022maximizing, li2023throughput, ma2022mobility, xu2024age, yang2024a}, the joint optimization of job offloading and resource allocation \cite{fan2023joint, hu2020dynamic, qi2024joint, li2019cooperative, gao2022deadline, vu2021optimal, wang2019joint}, referred to as the Deadline-constrained job Offloading and resource Allocation Problem (DOAP), remains underexplored.

DOP reduces to the two-dimensional Generalized Assignment Problem (2DGAP) when considering both bandwidth and computational resource contention (e.g., \cite{li2022maximizing}) and to the Generalized Assignment Problem (GAP, known to be NP-hard \cite{fleischer2006tight}) when only computational constraints are considered (e.g., \cite{li2022maximizing, li2023throughput, ma2022mobility, yang2024a, xu2024age}). DOAP is a more generalized version of DOP that additionally considers resource allocation, making it inherently NP-hard as well. Obtaining an exact solution for DOP or DOAP (e.g., \cite{vu2021optimal, fan2023joint}) requires exponential time. Consequently, most existing studies focus on polynomial-time heuristic algorithms. Among these, approximation algorithms stand out as a special class of heuristics that provide worst-case performance guarantees while maintaining low computational complexity. Prior studies that propose approximation algorithms (e.g., \cite{li2022maximizing, li2023throughput, ma2022mobility, yang2024a, xu2024age, wang2019joint}) primarily focus on the DOP variant, formulating their problems as GAP by considering only computational constraints with fixed resource allocations. Although Wang \emph{et al.} \cite{wang2019joint} addressed a DOAP and proposed an approximation algorithm with a probabilistic guarantee, their approach predefines resource allocation by manually specifying the offloading and processing times for each job. As a result, their approximation algorithm effectively applies only to DOP rather than the more general DOAP. To the best of our knowledge,  no approximation algorithm with constant approximation ratio has been proposed for DOAP.

DOAP constraints can be categorized into two types. The first type ensures that allocated resources and offloading decisions satisfy job-specific requirements, such as deadlines and ES accessibility. The second type enforces system-wide resource constraints, ensuring that the total allocated resources at each ES do not exceed its capacity. We define an \textit{assignment instance} of a job as a combination of (i) an offloading ES and (ii) allocated bandwidth and computational resources. Given an assignment instance, we can immediately verify whether it satisfies all first-type constraints and determine its contribution to the objective function. By focusing only on feasible assignment instances (i.e., those satisfying the first-type constraints), we only need to focus on the system-wide resource constraints in DOAP. Differences in the mathematical formulation of DOAP across various MEC applications primarily stem from variations in the first-type constraints and optimization objectives. Modeling DOAP based on assignment instances provides a uniform framework applicable to various scenarios.

In this paper, we study a DOAP in MEC with both bandwidth and computation resource constraints under the IoV scenario, aiming to maximize the total utility gained from offloading jobs; we refer to this specific DOAP as problem $\mathbf{P}$. We first formulate $\mathbf{P}$ as an Integer Linear Programming (ILP) problem by enumerating all possible assignment instances. Then, we introduce an approximation algorithm for $\mathbf{P}$, denoted as $\mathtt{IDAssign}$, based on a local ratio framework \cite{bar2004local}. We prove that $\mathtt{IDAssign}$ is a $\frac{1}{6}$-approximation algorithm for $\mathbf{P}$, i.e., the total utility obtained by $\mathtt{IDAssign}$ is no less than $\frac{1}{6}$ of the optimal utility of $\mathbf{P}$. Since $\mathbf{P}$ is formulated with the uniform framework, the approximation ratio of $\mathtt{IDAssign}$ is applicable across various DOAP applications. Finally, we evaluate $\mathtt{IDAssign}$ using profiled data from object detection applications and real-world taxi traces \cite{taxi-data}. While providing a theoretical guarantee, $\mathtt{IDAssign}$ demonstrates performance comparable to all evaluated baseline algorithms in the literature \cite{hu2020dynamic, qi2024joint, gao2022deadline, li2019cooperative, fan2023joint}.

\section{System Model and Problem Formulation}\label{sec:system}
In this section, we first introduce the MEC architecture, followed by a detailed formulation of DOAP in the presented MEC framework using its generalized form.

\subsection{System Architecture}
\label{sec:system_arch}
As illustrated in Fig.\ref{fig:system-vec}, we consider an MEC comprising a set of roadside ES providing services to moving vehicles. The notations used in the paper are summarized in Table\ref{table:notation}. Let $\mathcal{M}$ denote the set of $M$ ES, with $m_k \in \mathcal{M}$ representing the $k$-th ES. Each ES includes an access point (providing bandwidth for job offloading) and a server (providing computation resource for job processing). The computation resource capacity of each ES is measured in computing units, which depend on the specific server hardware configuration. For example, using NVIDIA’s multi-instance GPU technology \cite{nvidia-mig}, a GPU-enabled ES can partition its computation resources into multiple computing units, each comprising a predefined amount of memory, cache, and compute cores. Let $C_k$ be the number of computing units available at $m_k$, and let $C = \max\{C_k \mid m_k \in \mathcal{M}\}$.

\begin{figure}
    \centering
    \subfigure[]{\includegraphics[width=0.6\linewidth, page=1]{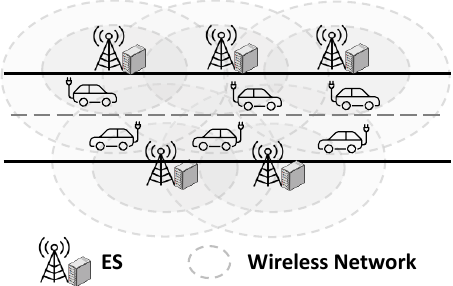} \label{fig:system-vec}} 
    \subfigure[]{\includegraphics[width=0.36\linewidth, page=2]{figures/system.pdf} \label{fig:network-ring}}
    \caption{(a) An example of mobile edge computing for IoV; (b) Partition the communication range of a wireless network into three network rings;}
    \Description{system figures}
\end{figure}

We consider a 5G network in MEC, where the Orthogonal Frequency Division Multiple Access (OFDMA) is used as the network access method for vehicles. In an OFDMA-enabled network, the smallest bandwidth allocation unit available to users is referred to as a resource unit or bandwidth unit (BU) \cite{ofdma}. Each user may be assigned one or more BU for data offloading. The size (or bandwidth) of a BU is measured in megahertz (MHz) and is determined by the specific configuration of the 5G network. Let $B_k$ denote the total number of BUs available at $m_k$, with $\beta_k$ representing the size of each BU. Define $B = \max\{B_k \mid m_k \in \mathcal{M}\}$. The wireless network of each $m_k$ has a limited effective communication range, and a vehicle can only offload a job to $m_k$ when it is within this range. Since the channel gain in wireless networks diminishes as the distance between a vehicle and an ES increases \cite{zhu2018task}, we divide each ES’s effective range into several \textit{network rings} such that the channel gain remains unchanged (from a practical viewpoint) within each network ring (Fig.~\ref{fig:network-ring}). Let $m_{kr}$ be the $r$-th network ring of $m_k$.

Let $\mathcal{N}$ represent the set of $N$ jobs to be processed. The $j$-th job, $n_j \in \mathcal{N}$, has an input data size $\theta_j$ (measured in megabytes, MB) and a deadline $\Delta_j$ (measured in seconds). When a job $n_j$ is generated, its vehicle's geographical location determines the set of network rings covering the vehicle and the corresponding channel gains. Let $\mathcal{R}_j$ be the set of network rings covering $n_j$'s vehicle at the time of its release, where $|\mathcal{R}_j| \leq M$. Based on the vehicle's moving speed and direction, the duration $t_{jkr}^v$ for which the vehicle remains within the coverage of network ring $m_{kr} \in \mathcal{R}_j$ can be estimated. If $m_{kr}$ is selected for offloading $n_j$, the processing of $n_j$ must be completed before $n_j$'s vehicle leaves the coverage of $m_{kr}$. Furthermore, if $n_j$ is processed on $m_k$ using $c$ computing units, let $t_{jkc}^p$ represent the processing time for $n_j$.

\begin{definition}[Assignment Instance]\label{def:instance}
    An assignment instance of job $n_j$ is defined as $\ell \triangleq \langle m_{\ell r}, b_{\ell}, c_{\ell} \rangle$, which represents the scenario where $n_j$ is offloaded to ES $m_{\ell}$ when $n_j$'s vehicle is covered by network ring $m_{\ell r} \in \mathcal{R}_j$ and $n_j$ is allocated $b_\ell$ bandwidth units and $c_\ell$ computing units. 
    %Moreover, $\ell$ needs to satisfy the deadline constraint \eqref{eq:sys1} and the ES access constraint (i.e., $m_\ell^r \in \mathcal{R}_j$).\aea{$m_{\ell}^r$ is part of $\ell$ or not? How can it not be?}
\end{definition}

Suppose job $n_j$ is assigned based on assignment instance $\ell \triangleq \langle m_{\ell r}, b_{\ell}, c_{\ell} \rangle$ (Definition \ref{def:instance}). Based on Shannon's theorem \cite{shannon1984}, the data offloading rate is given by $\eta_{j} = b_\ell \cdot  \beta_{\ell}  \log_2(1+p_{j} \cdot  h_{\ell r}/{\sigma^2})$,
where $b_\ell \cdot  \beta_{\ell}$ is the allocated bandwidth, $p_j$ is the offloading power of $n_j$'s vehicle, $h_{\ell r}$ is the channel gain in network ring $m_{\ell r}$, and $\sigma$ is the noise spectral density. The offloading time $t_{j}^o$ is given by $t_{j}^o = \theta_j / \eta_{j}$.
Given computation resource allocation $c_\ell$, the processing time of $n_j$ is $t_j^p = d_{j\ell c_\ell}^p$. Given that the data size of results is typically much smaller than job inputs, and the bandwidth capacity of the downlink is generally larger than that of the uplink in 5G networks, we disregard the result return time, as done in \cite{lin2023a}. Let $t_j = t_{j}^o + t_j^p$ denote the completion time of $n_j$. Each offloaded job $n_j$ offers a utility $u_j$ depending on $t_j$, which is defined as follows.
\begin{equation}\label{eq:sys2}
    u_j(t_j) = 
    \begin{cases}
        U_j & \text{if } t_j \le \Delta_j\\
        U_j \cdot \Psi_j(t_j) & \text{if }  \Delta_j < t_j \le \gamma_j\Delta_j\\
        0 & \text{otherwise}
    \end{cases}.
\end{equation}
$\gamma_j \ge 1$ represents the tolerance factor for deadline violations of job $n_j$. The full utility offered by $n_j$ when completed before its deadline $\Delta_j$ is denoted by $U_j$. Let $u(\ell)$ be the utility obtained from $n_j$ when it is assigned according to $\ell$, as defined in Eq.~\eqref{eq:sys2}. $\Psi_j$ is a utility-reduction function for deadline violations that maps to the range $[0, 1]$. This function $\Psi_j$ can take various forms (e.g., linear, nonlinear, or step functions), ensuring that $u_j(t_j)$ is flexible enough to model diverse practical scenarios. For example, $u_j(t_j)$ may represent a function related to energy consumption, completion time, or a combination of both factors.

In IoV, safety-critical jobs (e.g., object detection for autonomous driving) typically have hard deadlines that do not permit any deadline violations. Conversely, non-safety-critical jobs (e.g., image or video rendering) have soft deadlines, tolerating some deadline violations at the cost of reduced user experience when deadlines are missed. Therefore, this work considers jobs with both hard and soft deadlines. If $n_j$ has a hard deadline, then $\gamma_j = 1$. If $n_j$ has a soft deadline, $\gamma_j > 1$, and the utility derived from $n_j$ decreases from $U_j$ to $0$ according to $\Psi_j$ after its deadline is exceeded.

\subsection{Problem Formulation}
In this paper, we aim to develop a job offloading and resource allocation strategy that maximizes the utility gained from jobs while satisfying system resource constraints, ES access constraints, and job deadline constraints (Problem $\mathbf{P}$). 
%The \textit{resource constraint} ensures that the total resources allocated by any ES to all jobs do not exceed its capacity. The \textit{ES access constraint} guarantees that a network ring selected for offloading job $n_j$ must belong to the set $\mathcal{R}_j$. The \textit{deadline constraint} requires that an offloaded job $n_j$ must be completed before its maximum tolerable completion time $\gamma_j\Delta_j$ and before its vehicle leaves the coverage area of the selected network ring ($t_{jkr}^v$). 
Next, we formulate $\mathbf{P}$ in its generalized form.

%Since the utility-reduction function $\Psi_j$ can be nonconvex, directly formulating $\mathbf{P}$ may lead to a nonconvex objective function. To address this challenge, we present an ILP formulation for $\mathbf{P}$ based on assignment instances.

%\footnote{This constraint refers to the requirement that a network ring selected for offloading a job $n_j$ must belong to the set $\mathcal{R}_j$}
%each offloaded job $n_j$ is completed before $\gamma_j\Delta_j$ and before its vehicle leaves the coverage of the network ring for offloading.

\begin{definition}[Feasible Assignment Instance]\label{def:feasible_instance}
An assignment instance $\ell \triangleq \langle m_{\ell r}, b_{\ell}, c_{\ell} \rangle$ of job $n_j$ is feasible if $\ell$ satisfies (i) the ES access constraint (i.e., $m_{\ell r} \in \mathcal{R}_j$), (ii) the deadline constraint (i.e., $t_j \le \min\{\gamma_j\Delta_j, t_{j\ell r}^v\}$), and (iii) the resource constraint (i.e., $b_{\ell} \le B_\ell, c_{\ell} \le C_\ell$).
%, where completion time $t_j$ of job $n_j$ is defined in Section~\ref{sec:system_arch}.
\end{definition}

\begin{table}
    \caption{Notation (Main Parameters and Variables)}
    \resizebox{\columnwidth}{!}{
    \begin{tabular}[t]{|m{0.04\textwidth}|p{0.4\textwidth}|}
        \hline
        Symb. & Definition \\ \hline
        $\mathcal{M}$ & the set of $M$ ESs; $m_k \in \mathcal{M}$ denotes an ES \\ \hline
        $C_k$ & the total computing units of ES $m_k$; $C_k \le C$ \\ \hline
        $B_k$ & the total bandwidth units of ES $m_k$; $B_k \le B$ \\ \hline
        %$\beta_k$ & the size of bandwidth unit (BU) of ES $m_k$ \\ \hline
        $m_{kr}$ & $r$-th network ring of ES $m_k$ \\ \hline
        $\mathcal{N}$ & the set of $N$ jobs; $n_j \in \mathcal{N}$ denotes a job \\ \hline
        %$\theta_j$ & input data size of job $n_j$ \\ \hline
        $\mathcal{R}_j$ & the set of accessible network rings of job $n_j$; $|\mathcal{R}_j| \le M$ \\ \hline
        $\Delta_j$ & deadline of job $n_j$ \\ \hline
        $\gamma_j$ & tolerance factor for deadline violation of job $n_j$; $\gamma_j \ge 1$ \\ \hline
        $U_j$ & utility gained from job $n_j$ when completed before $\Delta_j$ \\ \hline
        %$\Psi_j$ &  utility-reduction function for $n_j$'s deadline violation \\ \hline
        %$t_{jkr}^v$ & the time that $n_j$'s vehicle leaves the coverage of $m_{kr}$  \\ \hline
        %$t_{jkc}^p$ & processing time of $n_j$ on ES $m_k$ with $c$ computing units   \\ \hline
        $t_j$ & completion time of job $n_j$   \\ \hline
        $\ell$ & $\ell \triangleq \langle m_{\ell r}, b_{\ell}, c_{\ell} \rangle$ is an assignment instance of a job \\ \hline
        %$m_{\ell r}$ is the network ring, $b_\ell$ ($c_{\ell}$) is the allocated bandwidth (computing) units. $\ell$ is also used as an ES index for ES $m_\ell$  \\ \hline
        $u(\ell)$ & utility gained from job $n_j$ when assigned based on $\ell$ \\ \hline
        $\mathcal{L}$ & the set of all assignment instances of all jobs \\ \hline
        $\mathcal{L}_j$ & the set of all assignment instances of job $n_j$ \\ \hline
        $\mathcal{E}_k$ & the set of all assignment instances mapped to ES $m_k$ \\ \hline
        $\mathcal{L}(\ell)$ & set of assignment instances related to the same job as $\ell$ \\ \hline
        $\mathcal{E}(\ell)$ & set of assignment instances mapped to the same ES as $\ell$ \\ \hline
        $\mathcal{L}_L$ & the set of all light assignment instances in $\mathcal{L}$ \\ \hline
        $\mathcal{L}_H$ & the set of all heavy assignment instances in $\mathcal{L}$ \\ \hline
        \hline 
        $x_\ell$ & binary selection variable of assignment instance $\ell$ \\ \hline
    \end{tabular}
    }
    \label{table:notation}
\end{table}

Given the finite bandwidth and computing resources of each ES, for each job $n_j \in \mathcal{N}$, we enumerate all its potential assignment instances by considering each $m_{kr} \in \mathcal{R}_j$ and all possible resource allocations. We then restrict our focus to feasible assignment instances (as defined in Definition \ref{def:feasible_instance}). Henceforth, \textit{unless explicitly stated otherwise, any reference to an assignment instance refers to a feasible assignment instance}. Let $\mathcal{L}$ denote the set of all feasible assignment instances for all jobs, $\mathcal{L}_j \subset \mathcal{L}$ the set of assignment instances for job $n_j$, and $\mathcal{E}_k \subset \mathcal{L}$ the set of assignment instances associated with ES $m_k$. For a given assignment instance $\ell$, let $\mathcal{L}(\ell)$ represent the set of all assignment instances corresponding to the same job as $\ell$, and let $\mathcal{E}(\ell)$ denote the set of all assignment instances associated with the same ES as $\ell$. Since each job has at most $M$ accessible network rings, $B$ bandwidth allocation choices, and $C$ computation resource options, the total number of assignment instances in $\mathcal{L}$ is at most $NMBC$. For any set of assignment instances $\mathcal{S}$, let $u(\mathcal{S}) = \sum_{\ell \in \mathcal{S}} u(\ell)$. For each assignment instance $\ell \in \mathcal{L}$, let $x_\ell \in {0, 1}$ be the selection variable of $\ell$, where $x_\ell = 1$ if and only if $\ell$ is chosen in the solution. Then, we formulate $\mathbf{P}$ as follows.
\begin{subequations}
    % \allowdisplaybreaks
    % \begin{equation}\label{eq:eop}
    % (\mathbf{P}) \ \max  {\textstyle \sum_{\ell \in \mathcal{L}} u(\ell) \cdot x_\ell}
    % \end{equation}
    % \addtocounter{equation}{-1}
    \begin{align}
        (\mathbf{P}) \ \max  {\textstyle \sum_{\ell \in \mathcal{L}}} u(\ell) & \cdot x_\ell \label{eq:eop}\\
        \text{subject to:} \qquad {\textstyle \sum_{\ell \in \mathcal{E}_k} b_\ell \cdot x_\ell \le B_k},& \forall m_k \in \mathcal{M}\label{eq:eop1} \\
        {\textstyle \sum_{\ell \in \mathcal{E}_k} c_\ell \cdot x_\ell \le C_k},& \forall m_k \in \mathcal{M}\label{eq:eop2} \\
        {\textstyle \sum_{\ell \in \mathcal{L}_j} x_\ell \le 1},& \forall n_j \in \mathcal{J}\label{eq:eop3} \\
        x_\ell \in \{0,1\},& \forall \ell \in \mathcal{L} \label{eq:eop4}
    \end{align}
Eqs. \eqref{eq:eop1} and \eqref{eq:eop2} are the bandwidth and computation resource constraints for each ES, respectively. 
\end{subequations}
Eq. \eqref{eq:eop3} ensures that at most one assignment instance is selected for each job. Notably, since set $\mathcal{L}$ consists of only feasible assignment instances, the deadline and ES access constraints have been implicitly satisfied.

% ========================================
%  Section: Dividing Assignment Algorithm
% ========================================
\section{Instance Dividing Assignment Algorithm}
In this section, we present an approximation algorithm for $\mathbf{P}$, named the Instance Dividing Assignment Algorithm ($\mathtt{IDAssign}$), which leverages a local ratio framework combined with an instance-dividing technique. Later in this section, we formally prove that $\mathtt{IDAssign}$ achieves a $\frac{1}{6}$-approximation ratio for $\mathbf{P}$. To the best of our knowledge, $\mathtt{IDAssign}$ is the first algorithm that provides a constant approximation ratio for DOAP.

The detailed steps of $\mathtt{IDAssign}$ are outlined in Algorithm \ref{alg:divide}. For each assignment instance $\ell$, let $\tilde{b}_\ell = \frac{b_\ell}{B_\ell}$ and $\tilde{c}_\ell = \frac{c_\ell}{C_\ell}$ represent the normalized bandwidth and computation resource allocation, respectively. We classify an assignment instance $\ell$ as a \textit{light instance} if $\tilde{b}_\ell \le \frac{1}{2}$ and $\tilde{c}_\ell \le \frac{1}{2}$, and as a \textit{heavy instance} otherwise. Let $\mathcal{L}_L$ denote the set of all light instances in $\mathcal{L}$, and $\mathcal{L}_H$ the set of all heavy instances in $\mathcal{L}$. The algorithm is then recursively called with the initial invocation $\mathtt{IDAssign}(\mathbf{u}, \mathcal{L})$, where $\mathbf{u}$ is a vector containing the utility values $u(\ell)$ for all $\ell \in \mathcal{L}$. Since the utility of each assignment instance is updated at each recursive layer, we denote the updated utility of $\ell$ at the $i$-th recursive layer as $w^i(\ell)$, where $u(\ell)$ represents the initial utility of $\ell$. 

%In the following, we prove the approximation ratio of $\mathtt{IDAssign}$ for $\mathbf{P}$.

In each recursive layer $i$, the algorithm selects an instance $\ell^i$ with the smallest $\max\{\tilde{b}_{\ell^i}, \tilde{c}_{\ell^i}\}$, prioritizing instances in $\mathcal{L}_L$ over those in $\mathcal{L}_H$ (line $4$). The utility value $w^i(\ell)$ for each assignment instance $\ell \in \mathcal{L}$ is then decomposed into two components, $w_1^i(\ell)$ and $w_2^i(\ell)$ (line $5$), where $w_2^i(\ell)$ represents the marginal gain corresponding to the choice of $\ell^i$. The marginal utility vector $\mathbf{w_2^i}$ becomes the utility vector $\mathbf{w^{i+1}}$ for the next recursive layer (line $6$). Based on the solution $\mathcal{S}^{i+1}$ returned from layer $(i+1)$, $\ell^i$ is added to the solution if it satisfies the feasibility constraints \eqref{eq:eop1}--\eqref{eq:eop3}. Next, we prove the approximation ratio of $\mathtt{IDAssign}$ for $\mathbf{P}$ as follows.

\begin{algorithm}[tb]
    \caption{\scalebox{1}{Instance Dividing Assignment Algorithm}}\label{alg:divide}

    \SetAlgoHangIndent{0pt} 
    \SetKwProg{Fn}{}{:}{}
    \SetKwInOut{Input}{input}
    \SetKwInOut{Output}{output}
    \SetKw{Return}{return}
    
    %Dividing set $\mathcal{L}$ into $\mathcal{L}_L$ and $\mathcal{L}_H$\;
    %\BlankLine
    \Fn{$\mathtt{IDAssign}$($\mathbf{w^i}$, $\mathcal{L}$)}{
    Remove all instances $\ell$ from $\mathcal{L}$ with $w^i(\ell) \le 0$\;
    \lIf{$\mathcal{L} = \emptyset$}{\Return $\mathcal{S}^i=\emptyset$}
    %Given $\mathcal{L} = \mathcal{L}_L \cup \mathcal{L}_H$,
    \leIf{$\mathcal{L}_L \neq \emptyset$}{$\ell^i \leftarrow \argmin_{\ell \in \mathcal{L}_L} (\max\{\tilde{b}_\ell, \tilde{c}_\ell\})$}{$\ell^i \leftarrow \argmin_{\ell \in \mathcal{L}_H} (\max\{\tilde{b}_\ell, \tilde{c}_\ell\})$}
    
    Decompose utility vector $\mathbf{w^i}$ into $\mathbf{w_1^i}$ and $\mathbf{w_2^i}$ (i.e., $\mathbf{w^i}=\mathbf{w_1^i}+\mathbf{w_2^i}$) such that for any $\ell \in \mathcal{L}$,
    \begin{equation*}
         w_1^i(\ell) = w^i(\ell^i) \cdot
        \begin{cases}
            1 & \text{if } \ell \in \mathcal{L}(\ell^i)\\
            (\tilde{b}_{\ell} + \tilde{c}_{\ell}) & \text{if }  \ell \in \mathcal{E}(\ell^i) \setminus \mathcal{L}(\ell^i)\\
            0 & \text{otherwise}
        \end{cases};
    \end{equation*}
    
    $\mathcal{S}^{i+1} \leftarrow $ $\mathtt{IDAssign}$($\mathbf{w_2^i}$, $\mathcal{L}$)\;
    \leIf{$\mathcal{S}^{i+1} \cup \{\ell^i\}$ is feasible}{\Return $\mathcal{S}^i \leftarrow \mathcal{S}^{i+1} \cup \{\ell^i\}$}{\Return $\mathcal{S}^i \leftarrow \mathcal{S}^{i+1}$}
    }
\end{algorithm}

\begin{theorem}\label{theorem:divide}
    \scalebox{0.97}{$\mathtt{IDAssign}$ is a $\frac{1}{6}$-approximation algorithm for $\mathbf{P}$.}
\end{theorem}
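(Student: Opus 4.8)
My plan is to run the standard local-ratio argument on the decomposition $\mathbf{w^i}=\mathbf{w_1^i}+\mathbf{w_2^i}$ produced in line~5. I will call a feasible solution $\mathcal{S}$ of $\mathbf{P}$ \emph{$r$-approximate} with respect to a utility vector $\mathbf{w}$ if $\sum_{\ell\in\mathcal{S}}w(\ell)\ge r\cdot\mathrm{OPT}(\mathbf{w})$, where $\mathrm{OPT}(\mathbf{w})$ is the optimal objective value of the ILP \eqref{eq:eop1}--\eqref{eq:eop4} under objective $\mathbf{w}$ on the instance set present at the current recursive call. The engine is the Local Ratio Lemma: since $w^i(x)=w_1^i(x)+w_2^i(x)$ for every feasible $x$, we have $\mathrm{OPT}(\mathbf{w^i})\le\mathrm{OPT}(\mathbf{w_1^i})+\mathrm{OPT}(\mathbf{w_2^i})$, so any solution that is $\tfrac16$-approximate w.r.t.\ both $\mathbf{w_1^i}$ and $\mathbf{w_2^i}$ is $\tfrac16$-approximate w.r.t.\ $\mathbf{w^i}$. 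I will also record the following facts for repeated use: $w^i(\ell^i)>0$ because $\ell^i$ survives the pruning in line~2, so every coordinate of $\mathbf{w_1^i}$ is nonnegative, the utilities are non-increasing along the recursion, and $w_2^i(\ell^i)=w^i(\ell^i)-w^i(\ell^i)=0$; removing instances of non-positive utility in line~2 changes neither $\mathrm{OPT}$ nor feasibility, so recursive guarantees transfer without loss; and each call zeroes $\ell^i$, hence deletes at least one instance, so the recursion terminates.

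I would then induct on the recursion depth, with the claim that $\mathcal{S}^i$ is feasible and $\tfrac16$-approximate w.r.t.\ $\mathbf{w^i}$. The base case $\mathcal{L}=\emptyset$ is immediate since then $\mathrm{OPT}(\mathbf{w^i})=0$. In the inductive step I assume $\mathcal{S}^{i+1}$ is feasible and $\tfrac16$-approximate w.r.t.\ $\mathbf{w^{i+1}}=\mathbf{w_2^i}$. Feasibility of $\mathcal{S}^i$ is guaranteed by the feasibility test before $\ell^i$ is appended; and because $w_2^i(\ell^i)=0$, the $\mathbf{w_2^i}$-value of $\mathcal{S}^i$ equals that of $\mathcal{S}^{i+1}$ whether or not $\ell^i$ is appended, so $\mathcal{S}^i$ remains $\tfrac16$-approximate w.r.t.\ $\mathbf{w_2^i}$. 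By the Local Ratio Lemma it then remains to show that $\mathcal{S}^i$ is $\tfrac16$-approximate w.r.t.\ $\mathbf{w_1^i}$, which is the heart of the proof.

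For that, I first bound $\mathrm{OPT}(\mathbf{w_1^i})$: in any feasible $O$, at most one instance lies in $\mathcal{L}(\ell^i)$ (by \eqref{eq:eop3}), contributing at most $w^i(\ell^i)$; the instances of $O$ in $\mathcal{E}(\ell^i)\setminus\mathcal{L}(\ell^i)$ all sit on the ES of $\ell^i$, so by \eqref{eq:eop1}--\eqref{eq:eop2} their normalized bandwidths sum to at most $1$ and their normalized computations to at most $1$, contributing at most $w^i(\ell^i)\sum(\tilde{b}_\ell+\tilde{c}_\ell)\le2w^i(\ell^i)$; everything else contributes $0$. Hence $\mathrm{OPT}(\mathbf{w_1^i})\le3w^i(\ell^i)$, and it suffices to prove $w_1^i(\mathcal{S}^i)\ge\tfrac12w^i(\ell^i)$. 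If $\ell^i\in\mathcal{S}^i$ this is clear, since $w_1^i(\mathcal{S}^i)\ge w_1^i(\ell^i)=w^i(\ell^i)$. Otherwise $\mathcal{S}^{i+1}\cup\{\ell^i\}$ is infeasible although $\mathcal{S}^{i+1}$ is feasible, so either (a) an instance of the same job as $\ell^i$ already lies in $\mathcal{S}^{i+1}\subseteq\mathcal{S}^i$, contributing $w^i(\ell^i)$ through the first case of $w_1^i$, or (b) a capacity constraint on the ES of $\ell^i$ is violated, which forces $\mathcal{S}^{i+1}\cap\mathcal{E}(\ell^i)\neq\emptyset$ (a lone $\ell^i$ always fits since $\tilde{b}_{\ell^i},\tilde{c}_{\ell^i}\le1$) and, absent (a), this set lies in $\mathcal{E}(\ell^i)\setminus\mathcal{L}(\ell^i)$ with each member carrying $w^i(\ell^i)(\tilde{b}_\ell+\tilde{c}_\ell)$ in $\mathbf{w_1^i}$. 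In case (b) the light/heavy split and the light-first rule in line~4 close the gap: if $\ell^i$ is light then $\tilde{b}_{\ell^i},\tilde{c}_{\ell^i}\le\tfrac12$, so the violated resource forces the corresponding normalized load of $\mathcal{S}^{i+1}\cap\mathcal{E}(\ell^i)$ to exceed $\tfrac12$, whence $w_1^i(\mathcal{S}^i)>\tfrac12w^i(\ell^i)$; if $\ell^i$ is heavy then $\mathcal{L}_L$ was empty at its layer, and since heaviness is a static property and pruning only deletes instances, every instance selected deeper---in particular each member of $\mathcal{S}^{i+1}\cap\mathcal{E}(\ell^i)$---is heavy, hence has $\tilde{b}_\ell+\tilde{c}_\ell>\tfrac12$, again giving $w_1^i(\mathcal{S}^i)>\tfrac12w^i(\ell^i)$. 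In all cases $w_1^i(\mathcal{S}^i)\ge\tfrac12w^i(\ell^i)\ge\tfrac16\mathrm{OPT}(\mathbf{w_1^i})$, which closes the induction; applied to the top-level call with $\mathbf{w^1}=\mathbf{u}$, it yields $u(\mathcal{S}^1)\ge\tfrac16\mathrm{OPT}(\mathbf{P})$.

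The main obstacle is this last paragraph, and within it the heavy case: one has to see that ``no light instance is available at depth $i$'' propagates to all deeper calls (because line~2 only removes instances and utilities never increase), which is precisely what lets each surviving neighbour of $\ell^i$ on its ES be charged at least $\tfrac12w^i(\ell^i)$ through $\mathbf{w_1^i}$; the constant $6$ then comes from pairing this $\tfrac12$ lower bound with the $3w^i(\ell^i)$ upper bound on $\mathrm{OPT}(\mathbf{w_1^i})$. A secondary but necessary point is the bookkeeping around line~2 together with checking that line~5 really is a decomposition of $\mathbf{w^i}$ with $\mathbf{w_1^i}\ge0$, so that the Local Ratio Lemma applies at every layer.
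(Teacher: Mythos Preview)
Your proposal is correct and follows essentially the same local-ratio argument as the paper: both bound $\mathrm{OPT}(\mathbf{w_1^i})\le 3w^i(\ell^i)$ via \eqref{eq:eop1}--\eqref{eq:eop3}, then establish $w_1^i(\mathcal{S}^i)\ge\tfrac12 w^i(\ell^i)$ through the identical case split (job-constraint blocking, light $\ell^i$ with residual load $>\tfrac12$, heavy $\ell^i$ with all deeper selections heavy), and combine via the decomposition $\mathbf{w^i}=\mathbf{w_1^i}+\mathbf{w_2^i}$. Your write-up is in fact a bit more careful on the bookkeeping (pruning preserves $\mathrm{OPT}$, $\mathbf{w_1^i}\ge 0$, heaviness persisting through deletions), but there is no substantive difference in method.
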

\begin{proof}
    Suppose $\mathcal{Q}$ is an optimal solution of $\mathbf{P}$.
    Let the innermost recursive layer of $\mathtt{IDAssign}$ be layer $I$, and the outermost be layer $1$. We use simple induction to prove this lemma by showing that $w^i(\mathcal{S}^i) \ge \frac{1}{6}w^i(\mathcal{Q})$ for $i=I, I-1, ..., 1$ (since $\mathbf{w^1} = \mathbf{u}$).

    \textbf{Base Case} ($i=I$): When $i=I$, $w^i(\ell) \le 0, \forall \ell \in \mathcal{L}$. The returned $\mathcal{S}^I = \emptyset$, so 
    $w^I(\mathcal{S}^I) = 0 \ge w^I(\mathcal{Q}) \ge \frac{1}{6}w^I(\mathcal{Q})$.

    \textbf{Inductive Step} ($i < I$): When $i < I$, suppose $w^{i+1}(\mathcal{S}^{i+1}) \ge \frac{1}{6}w^{i+1}(\mathcal{Q})$. Since $\mathbf{w_2^i} = \mathbf{w^{i+1}}$, $w_2^{i}(\mathcal{S}^{i+1}) \ge \frac{1}{6}w_2^{i}(\mathcal{Q})$. Based on the utility decomposition in line $5$, $w_1^i(\ell^i)=w^i(\ell^i)$, so $w_2^i(\ell^i)$$=$$0$; thus,
    \begin{equation}\label{eq:light1}
        {\textstyle w_2^{i}(\mathcal{S}^{i}) = w_2^{i}(\mathcal{S}^{i+1}) \ge \frac{1}{6}w_2^{i}(\mathcal{Q})}.
    \end{equation}
    In line $7$, if $\ell^i$ can be added to $\mathcal{S}^{i}$, $w_1^{i}(\mathcal{S}^{i}) \ge w^i(\ell^i)$. 
    Otherwise, either one $\ell \in \mathcal{L}(\ell^i)$ already exists in $\mathcal{S}^{i+1}$ that stops $\ell^i$ from being added to $\mathcal{S}^{i}$ (constraint \eqref{eq:eop3}) or the remaining resource in ES $m_{\ell^i}$ is not enough to accommodate $\ell^i$ (constraints \eqref{eq:eop1} and \eqref{eq:eop2}). In the former case, $w_1^{i}(\mathcal{S}^{i}) \ge w^i(\ell^i)$ due to line $5$. In the latter case, let $\mathcal{A}^{i} = \mathcal{S}^{i+1} \cap \mathcal{E}(\ell^i)$, representing the instances that have been selected in $\mathcal{S}^{i+1}$ and have resource contention with $\ell^i$. We consider the following two situations for the latter case: $\ell^i \in \mathcal{L}_H$ and $\ell^i \in \mathcal{L}_L$. If $\ell^i \in \mathcal{L}_H$, as we only consider instances in $\mathcal{L}_H$ when $\mathcal{L}_L=\emptyset$, the solution $\mathcal{S}^{i+1}$ contains only heavy instances; in this situation, there exists at least one heavy instance $\ell \in \mathcal{S}^{i+1}$ that prevents $\ell^i$ being added to the solution due to resource contention. Based on line $5$ and the definition of heavy instances, $w_1^{i}(\mathcal{S}^{i}) \ge \frac{1}{2}w^i(\ell^i)$. If $\ell^i \in \mathcal{L}_L$, either $\sum_{\ell \in \mathcal{A}^{i}}\tilde{b}_{\ell} \ge \frac{1}{2}$ or $\sum_{\ell \in \mathcal{A}^{i}}\tilde{c}_{\ell} \ge \frac{1}{2}$ (i.e., at least $\frac{1}{2}$ of $m_{\ell^i}$'s bandwidth or computation resource are allocated to $ \mathcal{S}^{i+1}$); thus, $w_1^{i}(\mathcal{S}^{i})  \ge \frac{1}{2}w^i(\ell^i)$ due to line $5$. 
    % \begin{itemize}
    %     \item If $\ell^i \in \mathcal{L}_H$, there must be at least one heavy instance $\ell \in \mathcal{L}_H$ has been added into $\mathcal{S}^{i+1}$ (we only consider instances in $\mathcal{L}_H$ when $\mathcal{L}_L=\emptyset$). Based on line $5$ and the definition of heavy instances, $w_1^{i}(\mathcal{S}^{i}) \ge \frac{1}{2}w^i(\ell^i)$.
    %     \item If $\ell^i \in \mathcal{L}_L$, either $\sum_{\ell \in \mathcal{A}^{i}}\tilde{b}_{\ell} \ge \frac{1}{2}$ or $\sum_{\ell \in \mathcal{A}^{i}}\tilde{c}_{\ell} \ge \frac{1}{2}$ (i.e., at least $\frac{1}{2}$ of $m_{\ell^i}$'s bandwidth or computation resource are allocated to $ \mathcal{S}^{i+1}$); thus, $w_1^{i}(\mathcal{S}^{i})  \ge \frac{1}{2}w^i(\ell^i)$ due to line $5$. 
    % \end{itemize}
    Therefore, if instance $\ell^i$ cannot be added to $\mathcal{S}^{i}$, we have $w_1^{i}(\mathcal{S}^{i}) \ge \frac{1}{2}w^i(\ell^i)$. 
    Besides, $w_1^i(\mathcal{Q})$ is maximized when $\mathcal{Q}$ contains one $ \ell \in \mathcal{L}(\ell^i) \setminus \mathcal{E}(\ell^i)$ and all resources of ES $m_{\ell^i}$ has been allocated to $\ell' \in \mathcal{Q} \cap \mathcal{E}(\ell^i) \setminus \mathcal{L}(\ell^i)$; in this case, $w_1^i(\mathcal{Q}) = w^i(\ell^i) + 2w^i(\ell^i) = 3w^i(\ell^i).$
    % \begin{equation*}
    %     w_1^i(\mathcal{Q}_L) = w^i(\ell^i) + 4w^i(\ell^i) = 5w^i(\ell^i).
    % \end{equation*}
    Hence, we have 
    \begin{equation}\label{eq:light2}
        w_1^{i}(\mathcal{S}^{i}) \ge \frac{1}{6}w_1^i(\mathcal{Q}).
    \end{equation}
    Given Eq. \eqref{eq:light1}, Eq. \eqref{eq:light2}, and $\mathbf{w^i}=\mathbf{w_1^i}+\mathbf{w_2^i}$, we have $w^{i}(\mathcal{S}^{i}) \ge \frac{1}{6}w^i(\mathcal{Q})$. Based on simple induction, $w^{i}(\mathcal{S}^{i}) \ge \frac{1}{6}w^i(\mathcal{Q})$ for $i = I, ..., 1$.
\end{proof}

\textit{Time Complexity.} Since $\mathcal{L}$ contains at most $NMBC$ assignment instances, $\mathtt{IDAssign}$ has at most $NMBC$ recursive layers. In line $5$, at most $NMBC$ assignment instances are considered for utility decomposition. Thus, the time complexity of $\mathtt{IDAssign}$ is $\mathcal{O}((NMBC)^2)$.

\begin{figure*}
    \centering
    \subfigure[]{\includegraphics[width=0.3\textwidth, page=3]{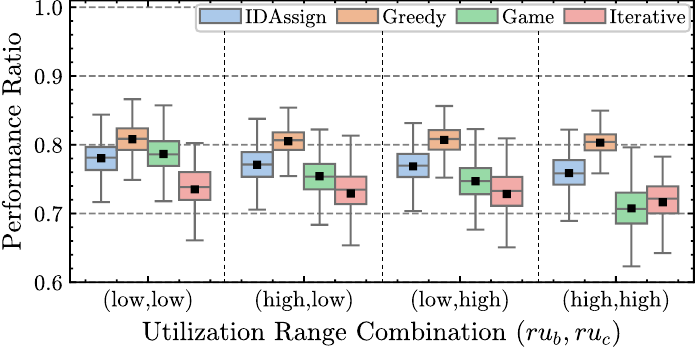} \label{fig:exp2a}} 
    \subfigure[]{\includegraphics[width=0.3\textwidth, page=4]{figures/experiment_result.pdf} \label{fig:exp2b}}
    \subfigure[]{\includegraphics[width=0.3\textwidth, page=2]{figures/experiment_result.pdf} \label{fig:exp2c}}
    \caption{(a) average performance ratios of algorithms under the (low,low) range combination for different jobset sizes; (b) average performance ratios of algorithms under the (high,high) range combination for different jobset sizes; (c) average runtime of algorithms for different jobset sizes (the plots for $\mathtt{Greedy}$ and $\mathtt{Iterative}$ are overlapped);}
    \Description{experiments for mec with 20 edge server}
\end{figure*}

% ========================================
%  Section: Experiments 
% ========================================
\section{Numerical Evaluation}\label{sec:experiment}
In this section, we evaluate the performance of $\mathtt{IDAssign}$. This algorithm is compared against three existing algorithms from the literature, focusing on utility performance and runtime efficiency. All experiments were performed on a desktop PC equipped with an Intel(R) Core(TM) i7-14700KF CPU and 64 GB of RAM.

%This section evaluates the performance of $\mathtt{IDAssign}$. We assess $\mathtt{IDAssign}$ by comparing it with three existing algorithms from the literature in terms of achieved utility and runtime efficiency. Experiments were conducted on a desktop PC with an Intel(R) Core(TM) i7-14700KF CPU and 64G RAM.

\subsection{Simulation Setup}
A real taxi GPS data trace \cite{taxi-data}, collected on April 1, 2018 in Shanghai by Shanghai Qiangsheng Taxi Company, is used as the vehicle traffic data. We select a city area of $1$km$\times1$km in Shanghai and extract all taxi trajectories passing through this area over a period of $15$ minutes during the evening peak. We consider a MEC with $20$ ES evenly distributed along the roads. Each ES is mounted at a height of $10$m, and its network is divided into $2$ rings with coverage ranges of $0\sim100$m and $100\sim200$m. 
%Profiling real 5G network is technically challenging; thus, we obtain the data offloading rate by profiling the wireless communication between an Nvidia Jetson Nano and a TL-WDR8620 router. 
For each ES $m_k$, $\beta_k$ is set to $2$MHz (the smallest BU type in WiFi-6), and $B_k$ is set to either $20$ or $40$. The data offloading rate $\eta_j$ for one BU is set to $1.65$ MBps for ring $1$ and $1.15$ MBps for ring $2$. Each ES is equipped with a GPU randomly sampled from $5$ types of GPUs (TITAN XP, TITAN RTX, RTX3090, RTX4060Ti, and A100). We set $C_k$ of each ES $m_k$ as $25$. 
%Notably, with the same number of allocated computing units, the processing time for the same job can differ with different GPU types.

In this experiment, we synthesize jobsets with varying resource utilizations and jobset sizes. The bandwidth utilization $ru_b$ of a jobset is defined as the ratio of the total bandwidth demand of all jobs in a jobset to the total bandwidth available in the MEC \cite{gao2022deadline}, and the computation resource utilization $ru_c$ of a jobset is analogously defined. We consider two resource utilization sampling ranges: \textbf{low} ($[0.6, 0.9]$) and \textbf{high} ($[1.2, 1.5]$), forming four different range combinations. Jobset size $N$ ranges from $200$ to $400$ in steps of $40$. For each jobset size, we sample $150$ different $(ru_b, ru_c)$ pairs from each range combination. In total, we synthesize $3600$ jobsets.

Given a jobset size and the sampled $(ru_b, ru_c)$, we synthesize a jobset as follows. The jobset release time is randomly sampled within the $15$-minute period, and each job is randomly mapped to a taxi active at the release time to obtain the ring coverage $\mathcal{R}_j$. For each job $n_j$, we randomly sample its input $\theta_j$ from $45$ images ranging from $0.15$ to $0.63$ MB, and its utility $U_j$ from the range $[20, 60]$. Then, we randomly sample the application type of $n_j$ from 9 GPU applications (resnet34/50/101, densenet121/169, and vgg11/13/16/19) for object detection.
%, where the application (job) processing time on each GPU type is obtained through profiling. 
We use Stafford's Randfixedsum Algorithm~\cite{emberson2010techniques} to distribute $ru_b$ and $ru_c$ to each individual job in the jobset in a uniformly random and unbiased manner, which is used to compute the job deadline.
%$n_j$'s deadline is set as either hard or soft with equal probability. 
Next, we compute the maximum tolerable completion time $\gamma_j\Delta_j$ where we uniformly sample $\gamma_j$ from the range $[1.8, 2.2]$ and define the function $\Psi_j$ as a linear decreasing function when $n_j$ has a soft deadline.
%\aea{Why not try with other values for \gamma_j? i.e., try 0.2, 0.4, 0.6}

\textit{Baseline Algorithms}. We compare $\mathtt{IDAssign}$ with three heuristic algorithms that are summarized from the literature on DOAP: $\mathtt{Greedy}$ \cite{gao2022deadline}, $\mathtt{Iterative}$ \cite{hu2020dynamic, li2019cooperative, qi2024joint, fan2023joint}, and $\mathtt{Game}$ \cite{li2019cooperative}. Note that none of these baselines provide theoretical guarantees on the performance of their algorithms for problem $\mathbf{P}$. $\mathtt{Greedy}$ defines $e_\ell = u(\ell)/(\frac{b_{\ell}}{B_\ell}\times \frac{c_{\ell}}{C_\ell})$ as the resource efficiency of $\ell \in \mathcal{L}$, and adds $\ell \in \mathcal{L}$ into the solution whenever feasible in a descending order of $e_\ell$. $\mathtt{Iterative}$ decouples $\mathbf{P}$ into a job offloading subproblem and a resource allocation subproblem; then, solve these subproblems iteratively until convergence. $\mathtt{Game}$ defines $\mathbf{P}$ as a non-cooperative game, where each vehicle is a player; in each round, select the $\ell \in \mathcal{L}$ that contributes the largest increment on the total utility gained.
%While omitting soft deadlines, these studies also assumed continuous resource allocation (this is less practical as resources are not arbitrarily divisible in real-world scenarios), which is determined with optimization techniques for continuous variables, making their approaches difficult to directly apply to $\mathbf{P}$. Thus, we use modified versions of their algorithms as baselines.
%\aea{Do they solve the exact problem as $\mathbf{P}$? If not, you should state so and explain that the algorithms have been adopted to solve $\mathbf{P}$?}
% \begin{itemize}
%     \item $\mathtt{Greedy}$ \cite{gao2022deadline}: Let $e_\ell = u(\ell)/(\frac{b_{\ell}}{B_\ell}\times \frac{c_{\ell}}{C_\ell})$ be the resource efficiency of $\ell \in \mathcal{L}$. $\mathtt{Greedy}$ considers $\ell \in \mathcal{L}$ in a descending order of $e_\ell$, and adds $\ell$ into the solution whenever feasible. 
%     \item $\mathtt{Iterative}$ \cite{hu2020dynamic, li2019cooperative, dai2018joint, qi2024joint, fan2023joint}: Decouple $\mathbf{P}$ into a job offloading subproblem and a resource allocation subproblem. Then, solve these subproblems iteratively until convergence.
%     \item $\mathtt{Game}$ \cite{li2019cooperative}: Define $\mathbf{P}$ as a non-cooperative game, where each vehicle is a player. In each round, select the $\ell \in \mathcal{L}$ that contributes the largest increment on the total utility gained. 
%     %(Li \emph{et al.} \cite{li2019cooperative} used $\mathtt{Game}$ to solve the job offloading subproblem, and we modified it to solve $\mathbf{P}$).
% \end{itemize}

\textit{Metrics}. We use Performance Ratio to measure the performance of all algorithms, defined as the ratio of the total utility obtained by an algorithm to the optimal utility for $\mathbf{P}$ (the optimal utility is obtained by solving $\mathbf{P}$ using the ILP solver Gurobi, with a timeout limit set to 10 minutes).
%\aea{How did you solve this? What is the stopping criteria? Need to explain a bit.}

\subsection{Result Discussion}
We first evaluated the average performance ratios of the algorithms across various jobset sizes under the range combinations (low, low) and (high, high) (Figs. \ref{fig:exp2a} and \ref{fig:exp2b}). The combination (low, low) indicates that both $ru_b$ and $ru_c$ of the jobsets are sampled from the \textbf{low} range. For both range combinations, $\mathtt{IDAssign}$ demonstrates improved performance as the jobset size increases, with better results observed under the (low, low) combination. This performance trend occurs because $\mathtt{IDAssign}$ prioritizes light instances (i.e., $\mathcal{L}_L$) over heavy instances (i.e., $\mathcal{L}_H$). When the jobset size is small or when the resource demand falls within the (high, high) range combination, the per-job resource demand increases, making heavy instances more favorable candidates in an optimal solution. However, due to the prioritization strategy in $\mathtt{IDAssign}$, the utility values $w^i(\ell)$ of many heavy instances $\ell \in \mathcal{L}_H$ are frequently reduced to zero before being considered, thereby decreasing the likelihood of selecting heavy instances and ultimately diminishing the overall performance of $\mathtt{IDAssign}$.

The average runtime of algorithms for different jobset sizes is illustrated in Fig. \ref{fig:exp2c}. While offering a theoretical guarantee and competitive practical performance, $\mathtt{IDAssign}$ achieves a runtime comparable to the baseline algorithms ($\mathtt{Greedy}$, $\mathtt{Game}$, and $\mathtt{Iterative}$). 
%Although the runtime of $\mathtt{IDAssign}$ is approximately twice that of the baseline methods, its average runtime of $0.056$ seconds demonstrates its high efficiency for real system deployment.
\textit{This makes $\mathtt{IDAssign}$ a preferable choice when both a theoretical performance guarantee and minimal computational overhead are required.} Furthermore, $\mathtt{IDAssign}$ exhibits nearly linear runtime scalability with respect to jobset size, confirming its practical suitability for handling large-scale jobsets.

\begin{figure}
    \centering
    \includegraphics[width=0.9\linewidth, page=1]{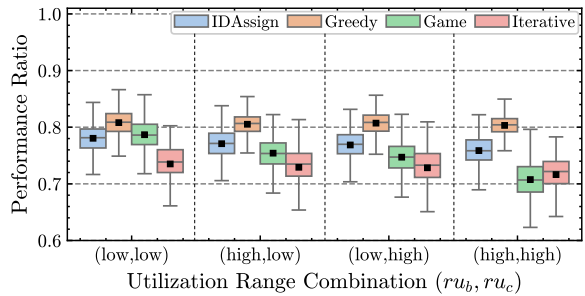}
    \caption{Performance ratios of algorithms under different resource utilization range combinations ($ru_b, ru_c$)}
    \label{fig:exp1}
    \Description{performances of all algorithms under different resource utilization}
\end{figure}

The overall algorithm performance across all resource utilization range combinations (including (low,low), (low, high), (high, low) and (high,high)) in the MEC with $20$ ES is presented in Fig. \ref{fig:exp1}. $\mathtt{IDAssign}$ achieves an average performance ratio of $76.9\%$. Despite the \textit{baseline algorithms lacking theoretical guarantees}, $\mathtt{IDAssign}$ still demonstrates a practical performance comparable to $\mathtt{Greedy}$, $\mathtt{Game}$, and $\mathtt{Iterative}$, underscoring its practical effectiveness. Remarkably, $\mathtt{IDAssign}$ surpasses its theoretical bound of $\frac{1}{6}$ by $60.2\%$, illustrating its ability to narrow the gap between theoretical guarantees and optimal solutions. 
%Overall, as the resource utilization of jobsets increases, algorithm performance generally declines due to the amplified impact of the heavy-instance selection strategy employed by $\mathtt{IDAssign}$.

\begin{figure}
    \centering
    \includegraphics[width=0.6\columnwidth, page=5]{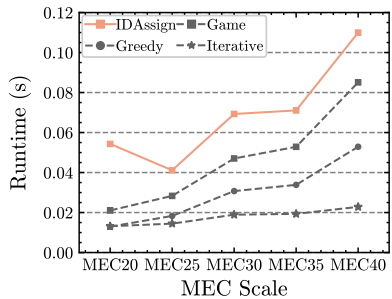}
    \caption{average runtime of algorithms for different MEC scales (with jobset size $280$)}
    \label{fig:exp3a}
    \Description{runtime for different MEC scales}
\end{figure}

To evaluate the runtime scalability of $\mathtt{IDAssign}$ across different MEC scales, we consider four additional MEC with $25$, $30$, $35$, and $40$ ES, referred to as MEC$25$, MEC$30$, MEC$35$, and MEC$40$, respectively. The base configuration with $20$ ES is denoted as MEC$20$. For each MEC, the jobset size is fixed at $280$, and $150$ jobsets are synthesized for each resource utilization range combination. The runtime performance across varying MEC scales is shown in Fig. \ref{fig:exp3a}. With a constant jobset size, the runtime of $\mathtt{IDAssign}$ exhibits linear growth with the number of ES. This scalability trend, together with observations from Fig. \ref{fig:exp2c}, demonstrates that $\mathtt{IDAssign}$ achieves efficient runtime scalability with respect to both jobset size and the number of ES in the MEC.

%The algorithm performances across three MEC scales (MEC$20$, MEC$30$, and MEC$40$) under range combinations (low,low) and (high,high) are depicted in Fig. \ref{fig:exp3b} and Fig. \ref{fig:exp3c}, respectively. While none of the baseline algorithms offer theoretical guarantees, $\mathtt{IDAssign}$ delivers comparable performance across different MEC scales and resource utilization scenarios. In both range combinations, the performance of $\mathtt{IDAssign}$ generally decreases as the MEC scale increases. This decline is attributed to the fact that, with a fixed jobset size and resource demand level, a larger number of ES increases the resource demand per job, which reduces the performance of $\mathtt{IDAssign}$ due to its prioritization strategy for assignment instances.

% ========================================
% ==============  Conclusion ==========
% ========================================
\section{Conclusion}
This paper addressed a joint job offloading and resource allocation problem, $\mathbf{P}$, in MEC with both bandwidth and computation resource constraints, aiming to maximize the total utility gained from jobs. We proposed an approximation algorithm, $\mathtt{IDAssign}$, for $\mathbf{P}$ with a theoretical bound of $\frac{1}{6}$. $\mathtt{IDAssign}$ provided the first constant approximation bound for $\mathbf{P}$. Experimental results showed that $\mathtt{IDAssign}$ delivered superior practical performance and exhibited strong runtime scalability. Notably, despite the absence of theoretical guarantees for the baseline algorithms, $\mathtt{IDAssign}$ achieves comparable practical performance, underscoring its efficiency and effectiveness.
In the future, we plan to explore the online deployment of resource allocation algorithms in MEC environments with dynamic network conditions.

\begin{acks}
This work was supported in part by the MoE Tier-2 grant MOE-T2EP20221-0006.
\end{acks}

%%
%% The acknowledgments section is defined using the "acks" environment
%% (and NOT an unnumbered section). This ensures the proper
%% identification of the section in the article metadata, and the
%% consistent spelling of the heading.
% \begin{acks}
% To Robert, for the bagels and explaining CMYK and color spaces.
% \end{acks}

%%
%% The next two lines define the bibliography style to be used, and
%% the bibliography file.
\bibliographystyle{ACM-Reference-Format}
% \bibliography{main}

%%% -*-BibTeX-*-
%%% Do NOT edit. File created by BibTeX with style
%%% ACM-Reference-Format-Journals [18-Jan-2012].

%%
%% If your work has an appendix, this is the place to put it.
%\appendix
\end{document}